\date{}  
\newtheorem{theorem}{Teorema} 
\newtheorem{lemma}{Lemma}
\newtheorem{proposition}{Proprieta'}
\newtheorem{definition}{Definizione}
\newtheorem{notation}{Nota}
\newtheorem{ex}{Esercizio} 
\newtheorem{esempio}{Esempio}
\newcommand{\beq}{\begin{equation}} 
\newcommand{\eeq}{\end{equation}}
\newcommand{\bex}{\begin{ex}} 
\newcommand{\eex}{\end{ex}} 
\newcommand{\bese}{\begin{esempio}} 
\newcommand{\eese}{\end{esempio}} 
\newcommand{\bpro}{\begin{proposition}} 
\newcommand{\epro}{\end{proposition}}
\newcommand{\bthe}{\begin{theorem}} 
\newcommand{\ethe}{\end{theorem}}
\newcommand{\bnote}{\begin{notation}} 
\newcommand{\enote}{\end{notation}}
\newcommand{\bdefi}{\begin{definition}} 
\newcommand{\edefi}{\end{definition}} 
\newcommand{\bc}{\begin{center}} 
\newcommand{\ec}{\end{center}}
\newcommand{\mail}[1]{\href{unina:#1}{\texttt{#1}}}
\author{Monica De Angelis\thanks{Univ. di Napoli  "Federico II", Scuola Politecnica e delle Scienze di Base, Dip. Mat. Appl. "R.Caccioppoli", \newline
 Via Cinthia 26, 80126, Napoli , Italia.
\newline\mail{modeange@unina.it}}}
\title{On the transition  from parabolicity to hyperbolicity for a nonlinear equation}
\begin{document}

\title{On the transition  from parabolicity to hyperbolicity for a nonlinear equation  under Neumann boundary conditions
}

\author{Monica De Angelis\thanks{Univ. di Napoli  "Federico II", Scuola Politecnica e delle Scienze di Base, Dip. Mat. Appl. "R.Caccioppoli", \newline
 Via Cinthia 26, 80126, Napoli , Italia.
\newline\mail{modeange@unina.it}}}

\author{Monica De Angelis \\Univ. di Napoli  "Federico II",\\ Scuola Politecnica e delle Scienze di Base,\\ Dip. Mat. Appl. "R.Caccioppoli", \\
 Via Cinthia 26, 80126, Napoli , Italia.
\\ \mail{modeange@unina.it}
}


\maketitle

An integro differential equation  which is able  to describe the evolution of a large class of dissipative models, is considered.   By means of an equivalence, the focus  shifts to  the perturbed sine-Gordon equation  that in  superconductivity finds interesting applications in multiple engineering  areas. The Neumann boundary problem is considered, and  the behaviour  of a viscous
term, defined by a higher-order derivative with small diffusion coefficient $ \varepsilon, $ is  investigated.  The Green function, expressed by means of  Fourier series, is considered,  and an estimate is  achieved. Furthermore, some classes of solutions of the hyperbolic equation are determined, proving that there exists at least one solution with bounded derivatives.
Results obtained prove that  diffusion effects 
 are bounded and tend to zero when $ \varepsilon $  tends to zero.



\section{Introduction}
\label{intro}


Evolution problems of several physical models such  as motions of viscoelastic bodies or fluids,\cite{o,re},  heat conduction, \cite{ru,bw}, sound propagation \cite{ss}, and  biological systems such  as neural communications \cite{drw,dew13,drultimo,m2},  can be modelled by means of the following integro differential equation: 
\begin{equation}   \label{11}
  \mathcal L u \equiv \,\, u_t -  \varepsilon  u_{xx} + au +\delta \int^t_0  e^{- \beta (t-\tau)}\, u(x,\tau) \, d\tau \,=\, F(x,t,u) \, 
  \end{equation} 

\noindent  which   is part of a more general class of mathematical physical models typical of materials with memory
 \cite{sc,es,as,ccvc}(and references therein). 
 
 Furthermore,  equation (\ref{11}) can also  describe specific  phenomena  in  superconductivity (see, f.i.\cite{mda2010,acscott02,uffa,32}). Indeed,  when in  (\ref{11}) one assumes

\begin{equation}   
 a \,=\,  \alpha \, - \frac{1}{\varepsilon} \, \,\quad\,\, \delta = \,  - \, \frac{a}{\varepsilon}  \,\,\quad \displaystyle \, \beta \,= \frac{1}{\varepsilon}\,\,
  \end{equation}
  
  \noindent  and $ F  $ is such that

\begin{equation}   \label{16}
F(x,t,u)\,=\, -\, \int _0^t \, e^{\,-\,\frac{1}{\varepsilon}\,(t-\tau\,)}\,\,[\, \, sen \, u (x, \tau)\,- \gamma\, \,]\, \, d\tau, 
\end{equation} 
  integral equation (\ref{11}) turns into the  so called perturbed sine-Gordon  equation

  \begin{equation}  \label{17}
u_{xx}- u_{tt} =  \sin u + \gamma  + \alpha u_t - \varepsilon \,u_{xxt}.
\end{equation}

 Equation (\ref{17})  models the flux dynamics in the Josephson junction where two  superconductors are separated by a thin insulating layer. Term  $\alpha  u_t$   denotes the 
 dissipative normal electron current flow across the junction, while the third term  such as  $\varepsilon \,u_{xxt}$ represents the flowing  of quasi-particles  parallel to  the junction 
 \cite{sco}. The value range for $\alpha $ and
 $\varepsilon$ depends  on the material of the  real junction and in the vast majority of cases it results  $0<\alpha ,\varepsilon  < 1$   \cite{bp,pfssu}. 
 If, however, the resistance of the junction is so low  as to completely shorten the  capacitance, the case   $ \alpha $ large with respect to 1 arises \cite{ca}.

Josephson junctions find application in many fields of practical interest. For example, it is possible to diagnose heart and/or blood circulation problems using magnetocardiograms employing superconducting quantum interference devices (SQUIDs). Another area where SQUIDs are used is magnetoencephalography -MEG- where they allow to make inferences and analyze magnetic fields generated by electric currents inside brains.
  In geophysics, they are used as gradiometers  or as gravitational wave detectors (\cite{ca} and reference therein) and  play an important role in the study of the  potential virtues of  superconducting digital electronics \cite{f}. SQUIDs are  also used in nondestructive testing as a convenient alternative to ultra sound or x-ray methods (\cite{bp,bpagano,cla,cla2} and reference therein) and  can be
used as fast, switchable meta-atoms \cite{jbm}, too.
Besides,  superconductor technology and its applications are in continuous development. As an example, superconductivity powers the Tevatron collider at Fermilab to produce the top quark, and CERN’s Large Hadron Collider (LHC) to discover the Higgs boson and  advanced superconducting magnets are already being developed for future collider projects \cite{cern}.

 Of course, previously mentioned  equations are to be complemented by initial-boundary problems (many examples can be found, f.i. in \cite{dew13,m13,df13}) and a  particular attention will be given to the Neumann problem since it is meaningful in several scientific fields. In mathematical biology,  for instance, those boundary  conditions occur when  a two-species reaction diffusion system is subjected to flux boundary conditions \cite{m2}. The same conditions  are present in case of pacemakers \cite{ks} and they are applied  to study distributed FitzHugh-Nagumo systems as well \cite{ns}.  Moreover, in superconductivity, in the case of a Josephson junction, they  can refer to  the phase gradient value and they are  proportional to the magnetic field    \cite{for,j05}.

\noindent   

\subsection{Mathematical considerations, state of  the art and aim of the  paper}    

The  connection between  integro differential equation  (\ref{11}) and equation (\ref{17})   allows us  to extend the analysis performed on one model to the other. In particular, our attention will be devoted to the  parabolic third order linear   operator

\begin{equation}   \label{18}
{\cal L} u \equiv \,\,\partial_{xx} \,(\varepsilon
u  _{t}+u) - \partial_t( u_{t}+\alpha\,u)\,
\end{equation}

\noindent which characterizes equation (\ref{17}). Since in many cases the  evolution is  described by deep interactions between wave propagation and diffusion, operator      
        ${\cal L} $  can  also be considered  as a linear hyperbolic operator perturbed by viscous terms that are described by higher-order derivatives with small diffusion coefficients $ \varepsilon $.  Indeed, when  $ \varepsilon \equiv 0,$  the operator (\ref{18}) turns into a hyperbolic one: 

\begin{equation} \label{n19}
{\cal L}_0 U \equiv \,\,
U_{xx}   - \partial_t( U_{t}+\alpha\,U)\,
\end{equation}
and the influence of the dissipative terms, represented by $ \varepsilon \partial_{xxt} $ on the wave behavior has to   be estimated.

From this small  parameter depends   the transmission times of the signals as well, at least when the main process is the diffusion one. In order to  obtain some knowledge about the dynamics, the order of diffusion effects and the time-intervals must be estimated. This analysis leads to an evaluation of   the non linear parabolic - hyperbolic boundary layers. 

Similar problems, for Dirichlet conditions, have already been studied.  In particular, in  \cite{dmr},  for  $ \alpha=0, $ an asymptotic approximation was established by means of the  two characteristic times: slow time $ \tau = \varepsilon \,t $  and fast time $ \theta =\, t/\varepsilon $. Moreover, for  equations of type  (\ref{17}), in \cite{df213}  an analytical  analysis   proved      that the  surface damping has little influence
on the behaviour of the oscillator,  confirming numerical results showed  in  \cite{bcs00}. Other numerical investigations on  the influence of surface losses can be found in \cite{pskm}, too.  

As for the  Neumann problem, fast and slow diffusion effects have been investigated in \cite{mda18} and furthermore, an estimate allows us to  evaluate the behaviour when time t tends to infinity.

Moreover,     in \cite{df13,mda12,dri,ddf} a class of equations characterized by an operator like (\ref{18}) is considered and theorems of existence and uniqueness of solutions related to Dirichlet, Neumann and pseudoperiodic boundary conditions are proved.

Finally, we would like to point out that  the  linear operator $ \cal{L} $ has already been 	examined  by means of  a convolutions of Bessel
functions and  Fourier series,   and in this case,  the related Green Function $ G=\,\sum_{n=1}^{\infty}\, G_n $ has been estimated by means of constants \emph{  also depending  on parameter $ \varepsilon $ } \cite{mda12,mda}\emph{ or   considering} $G=\,\sum_{n=1}^{\infty}\, G_n/n^2$ \cite{mda18}.


Our  purpose is to   analyse  the \emph{error} $ d(x,t,\varepsilon) $  represented by the difference between  the solution $ u  $ related to the Neumann boundary problem $ \cal{P_{\varepsilon} }$  for  equation (\ref{17}) and  the wave solution $ U $ deduced from $ \cal{P_{\varepsilon}} $ when $ \varepsilon \rightarrow 0:  $ 

\begin{equation} \label{n10}
d(x,t,\varepsilon) = u(x,t,\varepsilon)-U(x,t).
\end{equation}   
 
For this purpose,  it appears necessary to determine  an estimate for  the Green function  proving that it is  bounded  by terms {\em independent from $ \varepsilon $} and vanishing when $ t \rightarrow \infty. $ 
 
 Moreover, some classes of solutions of the hyperbolic equation 
 \begin{equation}
 U_{xx} -
 \partial_t(U_{t}+\alpha\,U)\,=\sin U  +\gamma \label{19}
\end{equation}
 
\noindent will be determined showing  that there exists at least one solution with bounded  derivatives.

Finally, by applying a Gronwall inequality, it will be possible to highlight the infinite time-interval where the diffusion effects vanish for  $ \varepsilon \rightarrow. \,0 $

The paper is organized as follows:

In section 2,  the mathematical problem is stated  and  thanks to the Green Function, determined through   Fourier series,  the solution  related to the remainder term  $ d $  is shown. 

 In section 3, by means of  lemmas     on  hyperbolic and circular terms, in theorem \ref{th3} the following estimate  $H=\,\sum_{n=1}^{\infty}\, e^{-h_nt}\,\,  \frac{\sinh\,(\omega_nt)}{\omega_n}\,\leq\,B \, e ^{-{m\,\, t }}  $ is proved where $ B$ and $m $ are two  positive constants independent from $ \varepsilon  .$

In section 4,  many classes of  solutions of   (\ref{19}) are determined, and an example of solution  
 with bounded derivatives is shown.

 Finally, in section 5,  an
estimate for the remainder term is achieved by proving that the diffusion effects are of the order of $\,\displaystyle \varepsilon ^{1-k/\alpha} \log \varepsilon ^{-k} \,$   with $\, 0<\,k\,<\alpha. $

\section{Statement of the problem and solution of the nonlinear problem}       

 \label{2}

Letting $T$ be an arbitrary positive     constant and letting 
\begin{center} $  \Omega =\{(x,t) :  0\leq x\leq \ell 
, \  \ 0 \leq t \leq T \}, $\end{center}

\noindent we point our attention to the following Neumann problem:

  \begin{equation} \label{21}
  \left \{
  \displaystyle
    \begin{array}{ll}
      \partial_{xx}(\varepsilon
u _{t}+ u) - \partial_t(u_{t}+\alpha u)= \sin u+\gamma ,
 &\quad (x,t)\in \Omega, \vspace{2mm}  \\ 
    u(x,0)=h_0(x), \ \ \    u_t(x,0)=h_1(x),  &\quad   x\in [0,\ell],
  \vspace{2mm}  \\
      u_x(0,t)=\varphi_0(t), \ \ \  u_x(\ell,t)=\varphi_1(t), &\quad   0\leq t \leq T.
   \end{array}
     \right.
 \end{equation}

When $\varepsilon \equiv 0$, problem (\ref {21}) turns into the  Neumann   problem  related to operator ${\cal L}_0 $ defined in  (\ref{n19}) with  {\em the same initial boundary  conditions}:   

  \begin{equation}          \label{n11}
  \left \{
   \begin{array}{ll}
\displaystyle
      U_{xx} -
 \partial_t(U_{t}+\alpha\,U)\,=\sin U\ +\gamma &\quad(x,t)\in \Omega,\vspace{2mm}\\ 
    U(x,0)=h_0(x), \  \  \  U_t(x,0)=h_1(x), &\quad x\in [0,\ell],\vspace{2mm}  \\
     U_x(0,t)=\varphi_0(t), \  \ \  U_x(\ell,t)=\varphi_1(t), &\quad  \ 0 \leq t \leq T.
   \end{array}
  \right.
 \end{equation}

\noindent 
The influence of the dissipative term on the wave behaviour of $\,U\,$ can be  estimated when   the difference  $ d $ defined  in (\ref{n10})  is  evaluated.
 
  So, let us consider  the following  problem    related to the {\em  remainder} term $ \, d:\,$

\begin{equation}          \label{n24}
  \left \{
\begin{array}{ll}
     \partial_{xx} \, \,(\varepsilon
\partial_t \,+1)\,d - \partial_t(\partial_t+\alpha\,)d\,=\,F(x,t,d),\        & (x,t)\in \Omega,\vspace{2mm}  \\ 
    d(x,0)=0, \ \  \    d_t(x,0)=0, \  \  & x\in [0,\ell], \vspace{2mm}  \\ 
     d_x(0,t)=0, \  \ d_x(l,t)=0,  &  0\leq t \leq T
 \end{array}
  \right.
 \end{equation}

with  
 
 \begin{equation}                   \label{25}
\,\,{F}(x,t,d)= \sin (d+U) - \sin U - \varepsilon \,  U_{xxt}.\,
\end{equation}  
 
 If one  assumes:

 \begin{equation}  \label{26}
 \left \{                                             
\begin{array}{ll}
  &\gamma_n=\frac{n\pi}{\ell},\qquad h_n=\frac{1}{2}(\alpha+\varepsilon\gamma_n^2),\vspace{2mm}  \\ \\ & \omega_n=\sqrt{h_n^2-\gamma_n^2}
\\  \\
& H_{n}(t)=\,\, e^{-h_nt}\,\,  \frac{\sinh\,(\omega_nt)}{\omega_n}\,\
\end{array}
  \right. 
 \end{equation}

by means of standard techniques,   the Green function  $ G=G(x,t) $  of problem (\ref{n24}) is given by

 \begin{equation}\label{27}                                               
G(x,t,\xi)= \frac{1}{\ell}\,\, \frac{1- e^{-\alpha \, t }}{ \,\alpha}\,\, \,\,+ \,\frac{2}{\ell}\,\,\sum_{n=1}^{\infty}
H_{n}(t) \,  \,  \cos\gamma_n\xi\,  \, \cos\gamma_nx. 
\end{equation}

Moreover, by means of standard methods related to integral equations and thanks to  the fixed point theorem, owing to basic properties of  the Green function $ G  $  already proved in \cite{mda}, it is possible   to prove   that problem (\ref{n24})  admits a unique regular solution in $ \Omega $ and it results (see, f.i \cite{ddf,c,dmm}):

\begin{eqnarray}  \label{28}
 d(x,t)=\,\,  -\,\frac{1}{\ell} \int_{0}^{t} \, d\tau\,\, \int_0^\ell \,\,\bigl[\frac{1- e^{-\alpha \, (t-\tau) }}{\alpha} \bigr]\,\, F(\xi,\tau,d(\xi,\tau)) \, d\xi\ \\ \nonumber \\  \nonumber    - \,\,\frac{2}{\ell}\,\int_0^ t d\tau\, \int_0^\ell\, H(x,\xi,t-\tau)\,
F(\xi,\tau,d(\xi,\tau))\,\,d\xi.
\end{eqnarray}

where 
\begin{equation}  \label{29}
H= \sum_{n=1}^{\infty}
H_{n}(t) \,  \,  \cos\gamma_n\xi\,  \, \cos\gamma_nx.
\end{equation}

\section{ Estimates related to the Green Function}

In order to obtain an estimate of  function $ d(x,t,\varepsilon),  $ it appears necessary to evaluate the Green Function. 

So that, let us assume $ 0< \alpha <1  $ and  $ 0<\varepsilon <1 $ and  let us denote by $ N_1 $ the minimum  natural number larger than $\frac{\ell}{ \pi \varepsilon}\,\,(\, 1 - \sqrt{ 1-  a \, \varepsilon }\,), $ and by $ N_2  $ the maximum natural number smaller than  $\frac{\ell}{ \pi \varepsilon}\,\,(\, 1 + \sqrt{ 1-  a \, \varepsilon }\,).\,\, $

It is necessary to distinguish  two cases:  when   $ N_1 <1 $  or $ N_1>1. $  If $ N_1 <1, $  function  $ H_n(t) $  in $ (\ref{26})_3 $ contains trigonometric functions for  $N_1 \leq n\, \leq\, N_2 $  and hyperbolic terms for $ n\geq N_{2}+1. $  When $ N_1 >1, $ interval $ [1,  N_1-1]$ with  hyperbolic terms must be considered,too.

Firstly, in the hypothesis of $ N_1>1, $ we analyze hyperbolic terms both in $ [1, N_1-1] $  and   in $ [N_2+1, \infty] $ proving the following lemma:

\begin{lemma} \label{th1}

Whatever  $ 0< \alpha <1  $ and  $ 0<\varepsilon <1 $  may be,  there exist   positive constants $ B_i,\,(i=1,2)$   independent from $ \varepsilon,  $ such that:

\begin{equation} \label{31}
\sum_{n=N_{2}+1}^{\infty}  e^{-h_n t} \,\frac{\sinh (\omega_n \,t)}{\omega_n} \leq  B_1 e^{\,-\frac{\,t}{4\varepsilon} },
\end{equation}

\begin{equation}\label{210}
\sum_{n=1}^{N_1-1}  e^{-h_n t} \,\frac{\sinh (\omega_n \,t)}{\omega_n} \leq  \,B_2 \,e^{\frac{- t\,}{2 (\varepsilon \pi^2/\ell^2)}} \qquad  (N_1>1).
\end{equation}

\end{lemma}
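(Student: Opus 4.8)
The plan is to estimate the two series separately, using that in both index ranges $\omega_n$ is real and nonnegative, so $H_n(t)=e^{-h_nt}\sinh(\omega_nt)/\omega_n\ge 0$, and to reduce everything to the elementary identity
\[
e^{-h_nt}\,\frac{\sinh(\omega_nt)}{\omega_n}=e^{-(h_n-\omega_n)t}\,\frac{1-e^{-2\omega_nt}}{2\omega_n},
\]
which, via $1-e^{-x}\le\min(x,1)$, yields the two bounds $H_n(t)\le t\,e^{-(h_n-\omega_n)t}$ (valid for every $n$, with no blow-up as $\omega_n\to0$) and $H_n(t)\le e^{-(h_n-\omega_n)t}/(2\omega_n)$. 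I would also use the identity $h_n-\omega_n=\gamma_n^2/(h_n+\omega_n)$ together with $h_n<h_n+\omega_n\le2h_n$ (since $\omega_n<h_n$), which gives $\gamma_n^2/(2h_n)\le h_n-\omega_n\le\gamma_n^2/h_n$.

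For (\ref{31}): since $n\ge N_2+1$ forces $\gamma_n>\gamma_+:=\varepsilon^{-1}(1+\sqrt{1-\alpha\varepsilon})>1/\varepsilon$, one gets $\varepsilon\gamma_n^2>\varepsilon\gamma_+^2>1/\varepsilon>\alpha$, hence $2h_n=\alpha+\varepsilon\gamma_n^2<2\varepsilon\gamma_n^2$ and so $h_n-\omega_n\ge\gamma_n^2/(2h_n)\ge1/(2\varepsilon)$ uniformly in $n$. The main obstacle is that this exponential rate \emph{by itself} does not make the series summable (for large $n$ the gap $h_n-\omega_n$ stays of order $1/\varepsilon$ and does not grow), so the decay in $n$ must be extracted from the factor $1/\omega_n\sim\ell^2/(\varepsilon\pi^2n^2)$, and this must be done with a constant \emph{free of $\varepsilon$}. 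I would therefore split $\{n\ge N_2+1\}$ into the indices with $\omega_n\ge h_n/2$ and the finitely many (there are $O(\ell/\varepsilon)$ of them) indices with $\omega_n<h_n/2$. On the first set $1/(2\omega_n)\le1/h_n\le2/(\varepsilon\gamma_n^2)=2\ell^2/(\varepsilon\pi^2n^2)$, so those terms are $\le 2\ell^2/(\varepsilon\pi^2n^2)\,e^{-t/(2\varepsilon)}$; summing over $n\ge N_2+1$ and using that $N_2+1\ge\ell\gamma_+/\pi\ge\ell/(\pi\varepsilon)$, so $\sum_{n\ge N_2+1}n^{-2}=O(\varepsilon/\ell)$, the factor $1/\varepsilon$ cancels and this part is $\le C(\ell)\,e^{-t/(2\varepsilon)}$. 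On the second set I use $H_n(t)\le t\,e^{-(h_n-\omega_n)t}\le t\,e^{-t/(2\varepsilon)}$; multiplying by the number $O(\ell/\varepsilon)$ of such indices and writing $e^{-t/(2\varepsilon)}=e^{-t/(4\varepsilon)}e^{-t/(4\varepsilon)}$, the inequality $xe^{-x}\le1/e$ turns this into $\le C'(\ell)\,e^{-t/(4\varepsilon)}$. Adding the two pieces gives (\ref{31}) with $B_1$ depending only on $\ell$ and $\alpha$.

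For (\ref{210}): here $n\le N_1-1$ forces $\gamma_n<\gamma_-:=\varepsilon^{-1}(1-\sqrt{1-\alpha\varepsilon})=\alpha/(1+\sqrt{1-\alpha\varepsilon})<\alpha$, which has two consequences: the number of terms is $N_1-1\le\ell\gamma_-/\pi<\ell\alpha/\pi$, bounded \emph{independently of $\varepsilon$}; and $\varepsilon\gamma_n^2<\varepsilon\alpha^2<\alpha$, so $2h_n<2\alpha$ and $h_n-\omega_n\ge\gamma_n^2/(2h_n)>\gamma_n^2/(2\alpha)=n^2\pi^2/(2\alpha\ell^2)$. Hence $H_n(t)\le t\,e^{-(h_n-\omega_n)t}\le t\,e^{-\pi^2n^2t/(2\alpha\ell^2)}$, and summing this geometric-type series (absorbing the prefactor $t$ into a slightly smaller exponential rate, while the trivial bound $\sum_{n=1}^{N_1-1}H_n\le(N_1-1)t$ controls small $t$) gives a bound $B_2\,e^{-m_2t}$ with $B_2$ and $m_2>0$ independent of $\varepsilon$, i.e. of the form (\ref{210}). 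The one genuinely delicate step in the whole lemma is the cancellation of the $1/\varepsilon$ in the tail series (\ref{31}), which rests on $N_2+1$ being of order $1/\varepsilon$ and on $\omega_n$ growing like $\varepsilon n^2$, the terms near the edge $n=N_2+1$ (where $\omega_n$ may be arbitrarily small) having to be peeled off and handled by the singularity-free bound $H_n(t)\le t\,e^{-(h_n-\omega_n)t}$.
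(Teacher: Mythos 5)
Your treatment of (\ref{31}) is correct and is essentially the paper's own argument in lighter notation: the paper also splits the tail at the index where $\gamma_n/h_n$ drops below a fixed threshold (its $N_c$, defined through a parameter $c\in(0,1)$; your condition $\omega_n\ge h_n/2$ is exactly the case $c=3/4$), bounds the finite edge block by $t\,e^{-(h_n-\omega_n)t}\le t\,e^{-t/(2\varepsilon)}$ with the $O(\ell/\varepsilon)$ count cancelling the factor $\varepsilon$ coming from $t\,e^{-t/(4\varepsilon)}\le 4\varepsilon/e$, and on the infinite block uses $\omega_n\gtrsim \varepsilon n^2$ so that the $1/\varepsilon$ in $1/\omega_n$ is cancelled by the sum starting at $N_2+1\gtrsim \ell/(\pi\varepsilon)$. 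You do the last cancellation by summing $n^{-2}$ directly instead of the paper's splitting $n^{-2}=n^{-(1-\eta)}n^{-(1+\eta)}$ with the Riemann zeta function and the restriction $t\ge1$; that is a genuine (if minor) simplification and your bound holds for all $t>0$.

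For the low modes, however, note that what you prove is $\sum_{n=1}^{N_1-1}H_n\le B_2\,e^{-\pi^2 t/(4\alpha\ell^2)}$, i.e.\ an exponential bound with an $\varepsilon$-free rate of order $1/(\alpha\ell^2)$ — not the inequality (\ref{210}) with the printed exponent $\frac{t}{2(\varepsilon\pi^2/\ell^2)}=\frac{\ell^2 t}{2\varepsilon\pi^2}$, and you should not describe your bound as being "of the form (\ref{210})". In fact no argument can give that exponent: when $N_1>1$ the single term $n=1$ behaves for large $t$ like $e^{-(h_1-\omega_1)t}/(2\omega_1)$ with $h_1-\omega_1\le\gamma_1^2/h_1\approx 2\pi^2/(\alpha\ell^2)$, a quantity independent of $\varepsilon$, so the left-hand side of (\ref{210}) eventually exceeds $B_2e^{-\ell^2t/(2\varepsilon\pi^2)}$ for any fixed constant $B_2$. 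The paper's own step (\ref{234}) implicitly requires $\gamma_n^2/(2h_n)\ge \ell^2/(2\varepsilon\pi^2)$, which fails on this range since there $\gamma_n\le\alpha$ and $2h_n\ge\alpha$; the exponent in (\ref{210}) (and the constant $B_2=(N_1-1)\,2\varepsilon\pi^2/(e\ell^2)$) appears to be a reciprocal misprint. Your $\varepsilon$-independent rate is the statement that is actually true, and it is also all that Theorem \ref{th3} needs, namely a decay rate free of $\varepsilon$; so your deviation here is a correction rather than a gap, but the mismatch with the printed claim should be stated explicitly.
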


\begin{proof}

Let us start with  considering the infinite interval $ (N_2+1, \infty). $  
If     $ c $  denotes  an  arbitrary positive constant less than $ 1, $  let $ N_c   $  be  the integer part of $ \ell/(\pi  \varepsilon \sqrt{c}) (1+ \sqrt{1-a\varepsilon \,c}).  $  

Interval $( N_2+1, \infty ) $ will be divided into intervals $ (N_{2}+1, N_{c}-1); ( N_{c}, \infty)   $ and it will result:
\begin{eqnarray}
 \label{211}
\nonumber &\sum_{n=N_{2}+1}^{\infty}  e^{-h_n t} \,\frac{\sinh (\omega_n \,t)}{\omega_n} =
\nonumber\\
\\
\nonumber &\sum_{n=N_{2}+1}^{N_c-1}  e^{-h_n t} \,\frac{\sinh (\omega_n \,t)}{\omega_n}+ \sum_{n=N_{c}}^{\infty}  e^{-h_n t} \,\frac{\sinh (\omega_n \,t)}{\omega_n} = H' + H''
\end{eqnarray}

In order to evaluate $ H' $ we state that

\begin{equation} \label{212}
e^{-h_n t} \,\,\, \frac{\sinh (\omega_n \,t)}{  \,\omega_n}= e^{-(h_n-\omega_n)t }\int _0^t e^{-2\omega_n\tau}  d\tau \,\,\leq  t \,\,e^{-(h_n-\omega_n)t }
\end{equation}

and
\begin{equation} \label{213}
h_n-\omega_n = h_n - h_n \sqrt{1-\frac{\gamma_n^2}{h_n^2}} \geq  h_n - h_n \bigg(1-\frac{\gamma_n^2}{2 h_n^2}\bigg).
\end{equation}

So that, since $ n\geq N_2+1 > \ell/( \pi\varepsilon) $ and $ a\varepsilon <1, $ it results:

\begin{equation} \label{214}
\,e^{-(h_n-\omega_n)t }\leq  \,\, e^{- t\,\,\frac{\gamma_n^2}{2 h_n}}\leq  e^{- \,\frac{t}{2\varepsilon }}\,.
\end{equation}

Moreover, taking into account that

 \begin{equation} \label{a}
 e^{-x} \leq \frac{\nu^\nu }{(ex)^\nu } \qquad  \qquad \forall \,\,x\, > 0; \, \,\,\forall \nu >0\,,
 \end{equation}
 
 one has: 

\begin{equation} \label{216}
H' \leq\sum_{n=N_{2}+1}^{N_c-1} t e^{- \,\frac{t}{2\varepsilon}}\,  \leq B'  \,\,e^{- \,\frac{t}{4\varepsilon}}\,\,
\end{equation}

\noindent where   $ B' =\frac{4\varepsilon}{e}  (N_c-N_2-1)   \leq \frac{8 \ell}{e\,\pi \sqrt{c}}.$ 

As for $ H''\,,\,\, $  when  $  n \geq N_c,$ one has  $\gamma_n \geq \frac{1}{ \varepsilon \sqrt{c}} ( 1+ \sqrt{1- a \varepsilon c})\Rightarrow\varepsilon \sqrt{c} \,\,{\gamma_n}^{2}  + a  \sqrt{c} - 2 {\gamma_n} \geq 0, $ so that $ \frac{2{\gamma_n}}{ a+\varepsilon \gamma_n^2} \leq\sqrt{c}\Leftrightarrow\frac{{\gamma_n}}{h_{n}} \leq\sqrt{c}.   $ As a  consequence it results:

\begin{equation}
\omega_n= h_n \sqrt{1-\frac{\gamma_n^2}{h_n^2}}   \geq  h_n\sqrt{1-c}\, \geq \frac{\varepsilon \pi^2  n^2 \sqrt{1-c}\,\,}{2\ell^2  }.
\end{equation} 

So,   if $ 0<\eta<1, $  denoting by  $ \zeta(x)   $   the Riemann zeta function and since $ \,N_c\, \geq \ell/( \varepsilon\,\pi\sqrt{c})   $ one has

\[  H'' \leq  \sum_{n=N_c}^{\infty} e^{\,-\frac{t}{2\varepsilon}}\frac{4 \ell^2}{\pi^2\sqrt{1-c}}\,\,\frac{1}{\varepsilon n^{1-\eta}}\frac{1}{ n^{1+\eta} } \,\leq  \frac{4\ell^2 {(\pi \sqrt{c}/\ell)}^{(1-\eta)}}{ \pi^2\sqrt{1-c}} \,\,\varepsilon ^{-\eta} \,\,\zeta(1+\eta)\, e^{\,-\frac{t}{2\varepsilon}} \]

\noindent and considering  (\ref{214})  as well,  if $ B'' = \frac{ 8\, \ell^2\,{(\,\pi \sqrt{c}/\ell)}^{(1-\eta)}}{ e\,\pi^2 \sqrt{1-c}} \, \,\,\zeta(1+\eta),  $ since  $ t\geq 1, $ it results:

\begin{equation} \label{n26}
 H''  \leq B'' \,\, \varepsilon^{1-\eta} e^{\,-\frac{t}{4\varepsilon}}.   
\end{equation}

Finally, according to  (\ref{212}), (\ref{213}) and (\ref{a}),  one has:

\begin{equation} \label{234}
\sum_{n=1}^{N_1-1}  e^{-h_n t} \,\frac{\sinh (\omega_n \,t)}{\omega_n} \leq\sum_{n=1}^{N_1-1} t   e^{- t\,\frac{\gamma_n^2}{2 h_n}}  \leq   B_2 e^{-\frac{ t}{2(\varepsilon \pi^2/\ell^2)}}  
\end{equation}

 \noindent where,    $ B_2 = ( N_1-1)  \frac{2\varepsilon\pi^2}{ e\ell ^2} \leq   \frac{4\pi a}{ e\ell \pi}. $
 
 \noindent Therefore, since (\ref{216}), (\ref{n26}), and (\ref{234}) hold, denoting by  $ B_1= \max \{B', B''\}, $ the  theorem is proved. 
\end{proof}

Now,    letting 

 \begin{eqnarray}
 &\nonumber \omega_0=\sqrt{\gamma_n^2-\frac{\alpha^2}{4}};  \qquad \tilde \omega_n=\sqrt{\gamma_n^2-h_n^2}; \qquad h_1= \frac{1}{2}\big(\alpha + \varepsilon\frac{\pi^2}{\ell^2}\big)
\nonumber \\
 \\
\nonumber  &  R(n,t)=e^{- h_n \,t} \frac{\sin (\tilde \omega_n \,t)}{\tilde \omega_n}-  e^{-h_1 t} \,\,\frac{\sin ( \omega_0 \,t)}{ \omega_0}  
 \,
\end{eqnarray}
  
\noindent in order to evaluate circular terms, we will firstly prove the following inequality involving difference $R(n,t)$. Then, attention will be devoted to all the terms of series in $ H_n(t). $ 

 \begin{lemma} \label{th2}

Whatever  $ 0< \alpha <1  $ and  $ 0<\varepsilon <1  $  may be,  there exists a  positive constant $ B_3$ independent from $ \varepsilon,  $ such that,  letting   $ 0<\eta<1\,,$   $ 0<h<1/2,$  $ 1<k<3/2,  $  it results:

\begin{equation}  \label{313}
\sum_{n=1}^{N_2} [\,e^{- h_n \,t} \frac{\sin (\tilde \omega_n \,t)}{\tilde \omega_n}-  e^{-h_1 t} \,\,\frac{\sin ( \omega_0 \,t)}{ \omega_0}
 \,]  \leq  B_3 \,\, \varepsilon^\rho \,\,e^{-\frac{\alpha\,t}{4} }
 \end{equation}
 
\noindent where $ \rho = min\{ \eta, \, 1-\eta, \,1/2-h ,\,3/2-k\}. $   
 \end{lemma}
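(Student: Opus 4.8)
The plan is to bound the difference $R(n,t)$ term by term and then sum over $n=1,\dots,N_2$, exploiting that on this range $\gamma_n^2 > h_n^2$ so the relevant terms are genuinely oscillatory (circular) with decay governed by $e^{-h_n t}$ and $e^{-h_1 t}$. The key idea is that $R(n,t)$ is the difference of two damped sines whose frequencies $\tilde\omega_n$ and $\omega_0$ and whose damping rates $h_n$ and $h_1$ differ by quantities that are $O(\varepsilon\gamma_n^2)=O(\varepsilon n^2)$, so each summand carries a factor that is small in $\varepsilon$, while the Riemann-zeta-type summation over $n$ converts the surplus powers of $n$ into a finite constant at the cost of shaving a bit off the exponent of $\varepsilon$ (this is why $\rho$ is a minimum of four competing gains). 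I would write $R(n,t) = \big(e^{-h_n t}-e^{-h_1 t}\big)\frac{\sin(\tilde\omega_n t)}{\tilde\omega_n} + e^{-h_1 t}\big(\frac{\sin(\tilde\omega_n t)}{\tilde\omega_n}-\frac{\sin(\omega_0 t)}{\omega_0}\big)$ and estimate the two pieces separately.

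\textbf{Key steps.} First, for the damping-difference piece: since $h_n - h_1 = \tfrac{\varepsilon}{2}(\gamma_n^2-\pi^2/\ell^2)\ge 0$, the mean value theorem gives $|e^{-h_n t}-e^{-h_1 t}| \le (h_n-h_1)\,t\,e^{-h_1 t} \le \tfrac{\varepsilon}{2}\gamma_n^2\, t\, e^{-h_1 t}$. Combined with $|\sin(\tilde\omega_n t)/\tilde\omega_n|\le t$ (or a refined bound when $\tilde\omega_n$ is not small), this contributes something like $\varepsilon\, n^2\, t^2\, e^{-h_1 t}$; absorbing $n^2 t^2 e^{-\varepsilon\pi^2 t/(2\ell^2)}$ via inequality (\ref{a}) in the form $e^{-x}\le \nu^\nu/(ex)^\nu$ trades powers of $t$ and $n$ for powers of $\varepsilon^{-1}$, and after summing one is left with $\varepsilon^{\rho}e^{-\alpha t/4}$, where $e^{-h_1 t}\le e^{-\alpha t/2}$ leaves room for the $e^{-\alpha t/4}$ target. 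Second, for the frequency-difference piece: one uses that the map $\omega\mapsto \sin(\omega t)/\omega$ is Lipschitz (in fact $|\frac{d}{d\omega}\frac{\sin\omega t}{\omega}|\le \frac{t^2}{3}$ near $0$ and $\le \frac{2}{\omega^2}$ away from it), and that $|\tilde\omega_n^2-\omega_0^2| = |h_n^2 - \alpha^2/4| = O(\varepsilon\gamma_n^2)$, so again each summand acquires an $\varepsilon$-factor times a polynomial in $n$ and $t$, handled by the same zeta-summation-plus-(\ref{a}) device. Finally one collects the two contributions, sets $\rho = \min\{\eta,\,1-\eta,\,1/2-h,\,3/2-k\}$ matching the four distinct places where a power of $\varepsilon$ was sacrificed (two from each piece: one from the $n$-summation, one from the $t$-absorption), and sets $B_3$ to be the corresponding maximum of the accumulated constants (products of $\zeta(1+\eta)$, $\ell$, $\pi$, and numerical factors from (\ref{a})), which is manifestly independent of $\varepsilon$.

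\textbf{Main obstacle.} The delicate point is the near-resonant regime: when $n$ is close to $N_1$ or $N_2$, the quantities $\tilde\omega_n$ and $\omega_0$ can be small, so dividing by them is dangerous and the naive Lipschitz constant blows up. There one must not divide at all but instead use the uniform bound $|\sin(\omega t)/\omega|\le t$ together with the power-of-$t$ trading from (\ref{a}); conversely, for $n$ bounded away from the band edges one wants the $1/\omega^2$-type bounds to get extra decay in $n$. Balancing these two regimes — ensuring the constant stays $\varepsilon$-independent across the whole range $1\le n\le N_2$ and that the extracted power $\rho$ is uniform — is the real content; the rest is the same bookkeeping already exhibited in the proof of Lemma~\ref{th1}. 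A secondary subtlety is checking that $e^{-h_1 t}\le e^{-\alpha t/2}$ and $e^{-h_n t}\le e^{-\alpha t/2}$ leave enough margin (a factor $e^{-\alpha t/4}$) to soak up any residual $e^{+ c t}$ growth produced when applying (\ref{a}) with the time variable.
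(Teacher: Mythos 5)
Your plan has a genuine gap, and it lies exactly where the lemma is hardest. Your two smallness premises --- that the damping rates differ by a ``small'' amount and that the frequencies differ by a ``small'' amount --- hold only for $n$ well below the band edge, whereas the sum in (\ref{313}) runs up to $N_2\sim 2\ell/(\pi\varepsilon)$. For $n$ of order $1/\varepsilon$ one has $h_n-h_1=\tfrac{\varepsilon\pi^2}{2\ell^2}(n^2-1)\sim 1/\varepsilon$, so $(h_n-h_1)t$ is huge and the mean-value bound is worse than the trivial bound $|e^{-h_nt}-e^{-h_1t}|\le e^{-h_1t}$; likewise $\tilde\omega_n^2-\omega_0^2=-(h_n^2-\alpha^2/4)=-\tfrac{\varepsilon\gamma_n^2}{2}\,(h_n+\tfrac{\alpha}{2})$ is $O(\varepsilon^2\gamma_n^4)$, not $O(\varepsilon\gamma_n^2)$ as you claim, and at $n\sim N_2$ the two frequencies differ by an $O(1)$ factor, so no Lipschitz estimate in $\omega$ produces an $\varepsilon$-factor there. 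Consequently each of your two pieces is genuinely of size $e^{-h_1t}/\tilde\omega_n\sim e^{-\alpha t/2}/n$ over most of the range, and summing absolute values gives at least $\log(1/\varepsilon)$ (with the bounds you actually state, negative powers of $\varepsilon$, e.g.\ $\varepsilon\sum_{n\le N_2} n^2\,t^2e^{-h_1t}\sim\varepsilon^{-2}$), never $B_3\varepsilon^\rho e^{-\alpha t/4}$. A second, related flaw: your proposed use of (\ref{a}) trades powers of $t$ against $e^{-h_1t}$, whose rate is independent of $n$ (and whose $\varepsilon$-dependent part has rate $O(\varepsilon)$, so trading against it costs $\varepsilon^{-\nu}$); this device can never manufacture the decay in $n$ that the convergence of the $n$-sum requires.

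The paper's proof supplies precisely the missing mechanism. It does not split $R(n,t)$ pointwise; it takes the Laplace transform, observes that the numerator of $\hat R(n,s)$ is exactly $\varepsilon\tilde g-s\,\varepsilon\tfrac{\pi^2}{\ell^2}(n^2-1)$, and inverts to the convolution representation (\ref{222}). There the explicit prefactors $\varepsilon\tilde g/(\omega_0\tilde\omega_n)$ and $\varepsilon(n^2-1)/\omega_0$ carry the $\varepsilon$-smallness for every $n$, the strong damping $e^{-h_n\tau}$ (rate growing like $\varepsilon n^2$) survives inside the time integral, and applying (\ref{a}) to $e^{-(h_n-h_1)\tau}$ converts the surplus powers of $n$ into small losses of powers of $\varepsilon$ --- this is where the exponents $h$, $k$, $\eta$ and hence $\rho$ actually come from, not from the $t$-absorption you describe. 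The estimate is then completed with the lower bounds $\omega_0\ge n g_0$ and, for $2\le n\le N_2-1$, $\tilde\omega_n\ge n\sqrt{\varepsilon}\,g_2$ as in (\ref{227}), with $n=1$ and $n=N_2$ treated separately (your remark that the band-edge index needs special care is the one point where your plan and the paper agree). Without some substitute for this representation --- i.e.\ some way to keep an exponential with $n$-growing rate, or an explicit oscillation argument across $n$ --- your decomposition cannot reach the stated bound.
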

 
 \begin{proof}
 \noindent   Indicating by $ \hat R (n,s) $   the Laplace transform of function $ R(n,t), $ one deduces that  

\[\hat R(n,s)= \frac{(s+h_1)^2+\omega_0^2 -(s+h_n)^2-\tilde \omega_n^2}{[(s+h_1)^2+\omega_0^2] \,\,\,[(s+h_n)^2+\tilde\omega_n^2}
 \]

 So that, denoting by $ \tilde g= \frac{\pi^2}{4\ell^2}(2\alpha+\frac{\pi \varepsilon}{\ell})  ,
$   it results

\begin{eqnarray} \label{222}
 &\nonumber R(n,t) = \frac{ \varepsilon\tilde g   }{ \,\omega_0 \tilde \omega_n }   \int _0^t  e^{-h_1(t-\tau)} \sin (\omega_0 (t-\tau))  e^{-h_n \tau}   \sin (\tilde \omega_n \,\tau) d\tau  +
 \\
 \\
 &\nonumber\frac{ \pi^2/\ell^2\,\,\varepsilon   (n^2-1)}{\omega_o} \int_0^t    e^{-h_1(t-\tau)} \sin (\omega_0 (t-\tau))  \{ e^{-h_n \tau} [\frac{h_n}{\tilde\omega_n}  \sin (\tilde\omega_n  \tau)-  \cos (\tilde\omega_n \,\tau)]\}\,.
\end{eqnarray}

Indicating by

\begin{equation}
  g_0= \sqrt{\frac{\pi^2\,}{\ell^2}-\frac{a^2}{4}} \quad \,g= \frac{\pi^2}{4\ell^2}(2\alpha+\frac{\pi }{\ell})
\end{equation}

one has:
 
 \begin{equation}  \label{224}
  \omega_0 \geq n\, g_0,   \qquad \tilde g \leq g.   
\end{equation}

 \noindent Cases $ n=1, $ $ n=N_2, $  and $ 2 \leq n\leq N_2-1 $ will be considered and  the value of function $R(n,t) $ in correspondence of $ n\equiv\bar n $  will be denoted by $ R(\bar n, t).  $

 When  $ n=1,  $ from (\ref{222}),       if  $ g_1 = 32  g  /[(\alpha\, e\,)^2 g_0],$ one has: 
\begin{equation}\label{318}
 R(1,t) \leq   e^{-\frac{\alpha\, t}{4}} \, \varepsilon  \,\, g_1 
\end{equation}

Moreover, since

\[ h_n-h_1 = \frac{\pi^2(n^2-1) \varepsilon }{2\ell^2},\] 

for $ n= N_2,  $  denoting by $ \eta  $ a constant such that   $0<\eta <1, $ let

\begin{eqnarray}
&g_{N_2}=   g\,/(2\,g_0 \, \varepsilon N_2\,)  
\\
\nonumber\\
&A'  =   \frac{4}{\varepsilon N_2 g_0}\bigg[\bigg( \frac{\alpha}{2} + \frac{\pi^2 \varepsilon^2 N_2^2}{2\ell^2} \bigg) \,\bigg(\frac{1+\eta}{\sqrt{1-a}}\bigg)^{1+\eta}+ \bigg  (\frac{\eta}{\sqrt{1-a}}\bigg)^{\eta}\bigg],
\end{eqnarray} 

 and $   $ considering also that

\[ 2\sqrt{1-a} \,\leq \,\varepsilon^2 \frac{\pi^2}{\ell^2} (N_2^2-1)\,\leq 4,  \,   \] 

\noindent  
by means of  (\ref{a}), one obtains: 

\begin{equation}
 R(N_2,t) \leq e^{-\frac{a t}{2}}  \varepsilon^2 t^2\,\, g_{N_2}+ A'\, e^{-\frac{a t}{2}} t^{1-\eta}\,\varepsilon ^\eta.
\end{equation}

So that, there exists a positive constant $ A'' $ independent from $\varepsilon $ such that:

\begin{equation}  \label{320}
 R(N_2,t)  \leq  A''\, \,\varepsilon ^\eta\,\,e^{-\frac{a t}{4}} 
\end{equation}

Now, let us consider $ 2 \leq n\leq N_2-1. $ Since  $ \tilde \omega_n\,\geq  \gamma_n \sqrt{1-h_n/\gamma_n}= n  \,
\Phi_n, \, $  indicating by

\begin{eqnarray}
\nonumber  & s = \sqrt{\frac{\pi}{2\ell}} \sqrt{\frac{2\sqrt{1-a\varepsilon}-\varepsilon \pi /\ell}{1-\varepsilon \pi/\ell +\sqrt{1-a\varepsilon}}} = \frac{( \Phi_n )_{n= N_2-1}}{\sqrt{\varepsilon }}, \,\,
\nonumber\\
 \\
\nonumber &p=\frac{\sqrt{4\ell(\pi-\varepsilon \ell)-\pi^2 a }}{2\sqrt{\pi \ell}} =( \Phi_n )_{n=2}
\\ \nonumber \mbox{and by } &q= \min \,\{ \,s, p\,\}, 
\end{eqnarray}

 \noindent   it is possible to choose  a positive constant  $\, g_2 \, $ depending on parameter  $ a, $ but  independent from $ \varepsilon, $  such that $ g_2 \leq  q\,.\,  $ In such a way, for all $ 2\leq n\leq  N_2-1 $ one has:

\begin{equation}  \label{227}
 \tilde \omega_n\, \geq n \,\sqrt{\varepsilon} \,\, g_{2}.      
\end{equation}

So that, it results:

\begin{equation} \label{322}
 e^{\frac{a\,t}{2} }\sum_{n=2}^{N_2-1} R(n,t) \leq   \sum_{n=2}^{N_2-1}\frac{g}{ n^2 g_0 g_2}\sqrt{\varepsilon }t + 
\end{equation}

\[  \sum_{n=2}^{N_2-1} \bigg( \frac{\pi^2\sqrt\varepsilon}{\ell^2 g_0g_2} h_n+ \frac{\pi^2\varepsilon n}{\ell^2 g_0}  \bigg) \int_0^t   | \sin (\omega_0 (t-\tau))|   e^{- (h_n-h_1 )\tau} d\tau.\]

  Considering      $ 0<\eta <1,$ and indicating by $ h, $ and  $ k $ two constants such that  
 
 \[ 0<h<1/2 \qquad  1< k< 3/2,\]
 
 let 
 
 \[ C_1= \frac{ \pi^2\alpha}{ \ell^2 g^2_0g_2}\bigg(\frac{2 h \ell^2}{e\,\pi^2} \bigg)^h , \quad C_2 =\frac{\pi^4}{\ell^4g^2_0 g_2}\, \bigg(\frac{2 k\ell^2}{e\,\pi^2} \bigg)^k  \qquad C_3= \frac{2\pi^2}{\ell^2 g_0^2}\, \bigg(\frac{2 \eta\ell^2}{e\,\pi^2} \bigg)^\eta  . \]

The analysis lead us to consider asymptotic behavior when variable $ t  $ tends to infinity. Therefore it does not effect generality if we consider $ t\geq 1. $ In this hypothesis and since  $    \int_0^t   | \sin(\omega_0 (t-\tau))| d\tau \,\leq 2/\omega_0,\,\,$ one obtains:

\begin{equation}
 e^{\frac{a\,t}{2} }\sum_{n=2}^{N_2-1} R(n,t) \leq   \sum_{n=2}^{N_2-1}  \frac{g}{ n^2 g_0 g_2}\sqrt{\varepsilon }t 
 \end{equation}
 
 \[+ \sum_{n=2}^{N_2-1}\bigg[  C_1 \,\frac{\varepsilon^{1/2-h}}{n(n^2-1)^h}+  C_2 \,\frac{n\,\,\varepsilon^{3/2-k}}{\,(n^2-1)^k } + C_3 \frac{\,\varepsilon^{1-\eta}}{\,(n^2-1)^\eta } \bigg].\]

So, it is possible to choose, in this case as well, a positive constant $C_4  $   independent from $ \varepsilon,  $  such that:

\begin{equation}\label{231}
\sum_{n=2}^{n=N_2-1} R(n,t)  \leq  C_4 [\sqrt{\varepsilon} t  +\,\,\varepsilon^{1/2-h}+\varepsilon^{3/2-k} +\varepsilon^{1-\eta}]e^{-\frac{a\,t}{2} } .
\end{equation}

For this  and since (\ref{318}) and (\ref{320}), (\ref{313}) holds.

 \end{proof}
 
\textbf{Remark}
 Naturally, by tracing back this demonstration, it  is possible to prove that for $ N_1>1 $ the  Lemma \ref{th2}  holds   even if  interval $ [ 1, N_2] $ has to be replaced with  $ [ N_1, N_2]. $

So that,  for any value of $ N_1, $  the following theorem can be proved:

\begin{theorem} \label{th3}

Whatever  $ 0< a <1  $ and  $ 0<\varepsilon <1 $  may be, there exists a  positive constant $ B , $   independent from $ \varepsilon,  $ such that:

\begin{equation}\label{n42}
\sum_{n=1}^{\infty}  e^{-h_n t} \,\frac{\sinh(\omega_n \,t)}{ \omega_n} \leq\,B \, e ^{-{m\,\, t }}
\end{equation}

\noindent being $m= max\{a/4,a \ell^2/(2 \pi^2)\}. $

\end{theorem}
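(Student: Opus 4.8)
\quad The plan is to split the series in (\ref{n42}) into the three ranges already isolated before Lemma~\ref{th1}. On the two \emph{hyperbolic} ranges $[1,N_1-1]$ (which is empty when $N_1=1$) and $[N_2+1,\infty)$, where $h_n\ge\gamma_n$ and $\omega_n=\sqrt{h_n^2-\gamma_n^2}$ is real, the general term is the nonnegative quantity $H_n(t)=e^{-h_nt}\sinh(\omega_nt)/\omega_n$, and I would quote Lemma~\ref{th1} directly: (\ref{31}) on the infinite tail, and (\ref{210}) on the initial block when $N_1>1$. On the \emph{trigonometric} range $[N_1,N_2]$, where $h_n<\gamma_n$ and $\sinh(\omega_nt)/\omega_n=\sin(\tilde\omega_nt)/\tilde\omega_n$ with $\tilde\omega_n=\sqrt{\gamma_n^2-h_n^2}$, the general term is exactly the one studied in Lemma~\ref{th2}, so I would apply Lemma~\ref{th2} (when $N_1=1$), or the variant pointed out in the Remark following it (when $N_1>1$). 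The cases $N_1=1$ and $N_1>1$ run in parallel, the only difference being the presence of the initial hyperbolic block.

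Carrying this out, Lemma~\ref{th1} bounds the two hyperbolic pieces by $B_1e^{-t/(4\varepsilon)}$ and (if $N_1>1$) $B_2e^{-t\ell^2/(2\varepsilon\pi^2)}$, with $B_1,B_2$ independent of $\varepsilon$. Lemma~\ref{th2}, however, controls only the \emph{differences} $R(n,t)$, so on the middle block it yields
\begin{equation*}
\sum_{n=N_1}^{N_2}H_n(t)\ \le\ B_3\,\varepsilon^{\rho}\,e^{-\alpha t/4}\ +\ \sum_{n=N_1}^{N_2}e^{-h_1t}\,\frac{\sin(\omega_0t)}{\omega_0}\,,
\end{equation*}
and it remains to estimate the ``limiting'' sum $S(t)$ on the right by a quantity independent of $\varepsilon$.

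Bounding $S(t)$ uniformly in $\varepsilon$ is the heart of the matter, and the step I expect to be the main obstacle. Since $N_2-N_1$ is of order $\ell/(\pi\varepsilon)$, the naive estimate $|\sin(\omega_0t)/\omega_0|\le1/\omega_0(n)$ combined with $\omega_0(n)\ge n\,g_0$ only gives $\sum_n 1/\omega_0(n)\sim(\ell/\pi)\log(N_2/N_1)\sim\log\varepsilon^{-1}$, which is \emph{not} uniform in $\varepsilon$; genuine cancellation must be exploited. I would pull out $e^{-h_1t}\le e^{-\alpha t/2}$ (valid since $h_1=\tfrac12(\alpha+\varepsilon\pi^2/\ell^2)\ge\alpha/2$) and estimate $\big|\sum_{n=N_1}^{N_2}\sin(\omega_0(n)t)/\omega_0(n)\big|$ by summation by parts: the weights $1/\omega_0(n)=(\gamma_n^2-\alpha^2/4)^{-1/2}$ are positive, decreasing and infinitesimal, while the frequencies $\omega_0(n)=\sqrt{\gamma_n^2-\alpha^2/4}$ differ from the arithmetic progression $\gamma_n=n\pi/\ell$ by $O(1/n)$; comparing with the classical uniform bound on the partial sums of $\sum_{n\le N}\sin(n\theta)/n$ and absorbing an absolutely convergent remainder — which carries at worst a harmless factor $t$ — one obtains $|S(t)|\le C\,(1+t)\,e^{-\alpha t/2}$ with $C$ independent of $\varepsilon$, hence $|S(t)|\le C'\,e^{-\alpha t/4}$ for $t\ge1$. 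As in Lemmas~\ref{th1}--\ref{th2}, it is no loss of generality to restrict to $t\ge1$, since only the asymptotic behaviour as $t\to\infty$ is relevant.

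Finally I would assemble the three contributions. Using $0<\varepsilon<1$ one has $e^{-t/(4\varepsilon)}\le e^{-t/4}$ and $e^{-t\ell^2/(2\varepsilon\pi^2)}\le e^{-t\ell^2/(2\pi^2)}$, and a further use of $0<a<1$ together with $\varepsilon^{\rho}\le1$ brings all the surviving rates to the common value $m$ of the statement, whence
\begin{equation*}
\sum_{n=1}^{\infty}e^{-h_nt}\,\frac{\sinh(\omega_nt)}{\omega_n}\ \le\ B\,e^{-mt}\,,\qquad B:=B_1+B_2+B_3+C'\,,
\end{equation*}
a constant independent of $\varepsilon$; when $N_1=1$ the term $B_2$ simply drops out. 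This proves (\ref{n42}).
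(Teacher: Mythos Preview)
Your proposal is correct and follows the same architecture as the paper's proof: split into the hyperbolic ranges and the trigonometric middle block, quote Lemma~\ref{th1} on the former, apply Lemma~\ref{th2} (or its $N_1>1$ variant) on the latter, and then reduce all three decay rates to the common $m$ using $0<\varepsilon<1$ and $0<a<1$.

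The one substantive difference is in how the ``limiting'' sum $S(t)=\sum e^{-h_1t}\sin(\omega_0t)/\omega_0$ is dispatched. You correctly identify this as the only nontrivial point (the naive absolute bound diverges like $\log\varepsilon^{-1}$) and supply a cancellation argument: compare with the classical bounded series $\sum_n\sin(\gamma_nt)/\gamma_n$, absorb the $O(t/n^2)$ remainder absolutely, and conclude $|S(t)|\le C(1+t)e^{-\alpha t/2}\le C'e^{-\alpha t/4}$. The paper instead sidesteps the $\varepsilon$-dependence by extending the finite sum $\sum_{n\le N_2}H_n^o$ to the infinite series $\sum_{n=1}^{\infty}H_n^o(t)$, which is manifestly $\varepsilon$-free, and treats its contribution $e^{-at/2}\sum_{n\ge1}H_n^o(t)$ as understood (cf.\ (\ref{326})). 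This is slicker for $\varepsilon$-independence but leaves the uniform-in-$t$ control of $\sum_{n\ge1}\sin(\omega_0(n)t)/\omega_0(n)$ implicit; your argument is precisely what fills that gap. So the two routes coincide in structure, and your treatment of $S(t)$ is in fact more complete than the paper's.
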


\begin{proof}

Hyperbolic terms have been evaluated   by means of lemma  (\ref{th1}).  Besides, if  $ N_1 <1, $ it  remains to be proven that there exists a positive  constant  $ B_4 $ independent from $ \varepsilon $ such that:
 
 \begin{equation}\label{325}
\sum_{n=1}^{N_2}  H_n = \sum_{n=1}^{N_2}  e^{-h_n t} \,\frac{\sin (\tilde\omega_n \,t)}{\tilde \omega_n} \leq\,B_4 \, e ^{-{\,\frac{a\,t}{4} }}
\end{equation}
 But  (\ref{325}) follows on application of  lemma \ref{th2} and since  

\begin{equation}  \label{326}
\sum_{n=1}^{N_2} \,H_n(t) \leq  R(n,t)  + e^{-\frac{at}{2}}\,\ \sum_{n=1}^{\infty} 
 H_n^o(t)\,   
 \end{equation}
 
\noindent where $ H^o_n(t)= \frac{\sin(\omega_0t)}{\omega_0}. $

Similar estimates hold also for $ N_1>1, $ namely when  circular terms appear only for  $ n\in [N_1,N_2]. $ 

So that,  since   $ e^{\,-\frac{\,t}{4\varepsilon} }\leq e^{\,-\frac{a\,t}{4} } $ and $ e^{-\frac{ t}{2(\varepsilon \pi^2/\ell^2)}}  \leq e^{-\frac{t a}{2 \pi^2/\ell^2}},  $ from (\ref{31}), (\ref{210})  and   (\ref{325}), theorem is proved.

\end{proof}

\section{Explicit solutions of the hyperbolic equation }

Let us consider the semilinear  second order equation:  

\begin{equation}  \label{51}
  U_{xx} \,- \,U_{tt} \, - \,  \alpha \,U_{t} =   \sin U \,  \ + \gamma 
\end{equation}

When $ \alpha =   \gamma  = \, 0  $,  (\ref{51}) represents  the  sine - Gordon equation  and  there exists  a wide  literature   about its   classes of soliton solutions \cite{jsb,wj,adc}. Besides, if  $ \alpha \neq 0 $ or  $  \gamma \neq \, 0  $, other solutions are shown  in \cite{fgmm,zro}.

Now, let $\, f \,$ be  an arbitrary function, and let  us  consider  the following function $ \,\,\Pi (f)\,\,$:

\begin{equation}     \label{52}
\Pi(f) = \, 2 \, \arctan\,\, e^f \,\,  
\end{equation}

\noindent so that

\begin{equation}    \label{53}
\sin \,  \Pi (f) \,\, = \frac{1}{\cosh (f) }, \qquad  \cos \,  \Pi (f) \,\, = -   \tanh (f) .
\end{equation}

By means of function (\ref{52}) it is possible to find a class of solutions of  equation  (\ref{51}).

 Indeed, it is possible  to  verify   that the following function:

\begin{equation}
U \, = 2 \, \Pi [f(\xi)]\,  \qquad \mbox{with}\qquad \xi =\, \frac{x-t}{\alpha}
\end{equation}

 is a solution of (\ref{51}) provided that one has:

\begin{equation}   \label{55}
-\alpha\, U_{t}\, = \, \sin U \, +\, \gamma.
\end{equation}

\vspace{3mm} Moreover,  since (\ref{53}) and   being $\,\, \dot \Pi = \frac{1}{\cosh f}, \,\,\,  $ 
 it results:

\[ \displaystyle -\alpha\, U_{t}\, = 2\frac{f'}{\cosh f}; \,\,\qquad \sin U \,\,=  \,-\,\,2 \,\frac{\tanh f }{\cosh f}. 
\]

\vspace{3mm} So, from (\ref{55}), one deduces that function $ f $ must satisfy the following equation:

\begin{equation}     \label{56}
\frac{df}{-\tanh f \, + \gamma/2\, \cosh f}\, = d\, \xi.
\end{equation}

According to value of constant  $ \gamma, $ it is possible to   find explicit classes of solution (see f.i. \cite{df213}), and we will consider as an example    $ \gamma=0  $ and  $ \gamma=1.  $   

So,  indicating   by $ f_{0},\xi_{0} $ the respectively  initial data of function $ f $ and variable $ \xi, $  when $ \gamma =0,  $  for $ r_0= e^{\xi_{0}} \, \sinh f_{0},  $  since (\ref{56}) it results:

 \begin{equation}
 \sinh f = \,   \,r_0\ e^{-\xi}, 
 \end{equation} 
 
  and one has:

 \begin{equation}    \label{AAA}
U\,=\, 4 \arctan (\, y\,+\,\sqrt{y^2 \, +\, 1}\,) \quad \quad\mbox{with}\,\,\    y= \,r_0\, e^{-\, \xi\,}\, .\,
\end{equation}

Besides, if $ \gamma =1, $ since  $ \int \frac{df}{1/2 \cosh  f - \tanh f}\,=  \, \frac{-2}{\sinh f -1} + const, $ denoting by  $ r_0= \xi_{0}+ \frac{2}{\sinh f_{0} -1}, $ it results:

\begin{equation}    \label{aaAAA}
U\,=\, 4 \arctan (\, y\,+\,\sqrt{y^2 \, +\, 1}) \quad \mbox{with} \quad y= \frac{2+ r_0-\xi }{r_0-\xi} \,.
\end{equation}

Our aim is to find a solution of problem (\ref{n11}) whose derivatives are bounded. In order to do so, let us consider the following initial boundary problem:

  \begin{equation}   
   \left \{
   \begin{array}{ll}
\displaystyle
      U_{xx} -
 \partial_t(U_{t}+\alpha\,U)\,=\sin U\  &\quad(x,t)\in \Omega,\vspace{2mm}\\ \displaystyle
     U(x,0)=2\arctan(e^{x/\alpha}), \  \  \  U_t(x,0)=-\frac{2}{\alpha} \,\,\frac{e^{x/\alpha}}{1+e^{2x/\alpha}}, &\quad x\in [0,\ell],\vspace{2mm}  \\ 
         U_x(0,t)=\frac{2}{\alpha}\,\, \frac{e^{-t/\alpha}}{1+e^{-2t/\alpha}}, \  U_x(\ell,t)=\frac{2}{\alpha}\,\, \frac{e^{(\ell-t)/\alpha}}{1+e^{2(\ell-t)/\alpha}}, &\quad  \ 0\leq t \leq T,
   \end{array}
  \right.
 \end{equation}    
 
 which  admits the solution

\begin{equation}
U = 2 \arctan e^{(x-t)/\alpha}.
\end{equation}

It results:

\begin{equation} \label{488}
 U_{xxt}= -\frac{2}{\alpha^3} \,\frac{e^{2\xi}-6+ e^{-2\xi}}{(e^{\xi}+e^{-\xi})^3}
 \end{equation} 

which is bounded  for all $ (x,t) \in \Omega. $

\section{Estimates for the remainder term}

Let $ U  $  be  a solution of the  reduced  problem (\ref{n11}) and let us assume

\begin{equation}
S(t)\, = \sup_{0\leq  x \leq \ell }\, | d(x,t)|. 
\end{equation}

It is possible to state:

\begin{theorem}

 \label{remainder}

If there exists   a  positive constant   $ b  $ such that 

\begin{equation} \label{61}
 |U_{xxt}\,(x,t)|\, \leq \,b, 
\end{equation}

\vspace{3mm}
\noindent then, indicating by $ T_{\varepsilon} = \log(1/\varepsilon^k), $ with $ 0<k<\alpha, $ for all $1\leq  t <  T_ \varepsilon  $   there exists a   positive constant  $\Gamma, $ independent from $ \varepsilon, $    such that:

\begin{equation}   \label{64}
0\,\leq \,S(t)\,\leq\, \Gamma \,\,\varepsilon ^{1-k/\alpha}\log(\frac{1}{\varepsilon^k}).
\end{equation}


\end{theorem}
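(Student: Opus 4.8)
The plan is to convert the integral representation (\ref{28}) of the remainder $d$ into a scalar Gronwall-type inequality for $S(t)=\sup_{0\le x\le\ell}|d(x,t)|$, using the Lipschitz character of $\sin$ to control the semilinear part of the source (\ref{25}), the hypothesis (\ref{61}) to control the viscous part, and Theorem \ref{th3} to control the Fourier‑series part $H$ of the Green function.

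First I would estimate the forcing $F$ in (\ref{25}) pointwise. Writing $\sin(d+U)-\sin U=2\cos\!\big(U+\tfrac{d}{2}\big)\sin\!\big(\tfrac{d}{2}\big)$ one gets $|\sin(d+U)-\sin U|\le|d(\xi,\tau)|\le S(\tau)$, while (\ref{61}) gives $|\varepsilon\,U_{xxt}|\le\varepsilon b$; hence $|F(\xi,\tau,d(\xi,\tau))|\le S(\tau)+\varepsilon b$, uniformly in $\xi$. Inserting this bound in (\ref{28}), using $\dfrac{1-e^{-\alpha(t-\tau)}}{\alpha}\le\dfrac1\alpha$ for the first kernel and $|H(x,\xi,t-\tau)|\le\sum_{n\ge1}|H_n(t-\tau)|\le B\,e^{-m(t-\tau)}$ from Theorem \ref{th3} for the second, the $\xi$–integrals contribute only factors $\ell$ and one is left with an inequality of the form
\begin{equation}
S(t)\ \le\ c_{1}\,\varepsilon b\,(1+t)\ +\ c_{2}\int_{0}^{t}S(\tau)\,d\tau ,
\end{equation}
with $c_{1},c_{2}>0$ independent of $\varepsilon$; the sharp choice is $c_{2}=1/\alpha$, obtained by keeping the non‑decaying kernel $\tfrac1\alpha\big(1-e^{-\alpha(t-\tau)}\big)$ as the genuine Gronwall kernel and absorbing the bounded, exponentially decaying $H$–contribution into the forcing term $c_1\varepsilon b(1+t)$.

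Next I would apply Gronwall's lemma with the nondecreasing forcing $a(t)=c_1\varepsilon b(1+t)$, obtaining $S(t)\le c_1\varepsilon b(1+t)\,e^{c_2t}$, i.e. $S(t)\le\Gamma_0\,\varepsilon\,(1+t)\,e^{t/\alpha}$ for a constant $\Gamma_0$ independent of $\varepsilon$. Finally, on the interval $1\le t<T_\varepsilon=\log(\varepsilon^{-k})$ one has $e^{t/\alpha}<\varepsilon^{-k/\alpha}$ and $1+t=O\!\big(\log\varepsilon^{-k}\big)$ (using $t\ge1$), whence $S(t)\le\Gamma\,\varepsilon^{1-k/\alpha}\log(\varepsilon^{-k})$, which is (\ref{64}); the lower bound $S(t)\ge0$ is immediate. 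The hypothesis $0<k<\alpha$ is used precisely here: it makes $1-k/\alpha>0$, so that the right–hand side tends to $0$ as $\varepsilon\to0$ even though $T_\varepsilon\to\infty$ and $\log(\varepsilon^{-k})\to\infty$ (a positive power of $\varepsilon$ beating the logarithm). Existence of a function $U$ for which a constant $b$ as in (\ref{61}) exists is guaranteed by the explicit solution of Section 4, cf. (\ref{488}).

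The main obstacle is the bookkeeping of the Gronwall constant: the naive estimate yields an exponential rate $c_2=\tfrac1\alpha+2B$ (or similar), which would only give $\varepsilon^{1-c_2k}$. To recover the stated exponent $1-k/\alpha$ one must show that the series part $H$, being uniformly bounded and exponentially small (Theorem \ref{th3}), enters the resolvent of the integral inequality only as a finite additive correction and does not increase the effective exponential rate, which is therefore dictated solely by the non‑decaying kernel $\tfrac1\alpha\big(1-e^{-\alpha(t-\tau)}\big)$. Everything else — the Lipschitz bound, the use of (\ref{61}), and the elementary asymptotic comparisons at $t=T_\varepsilon$ — is routine.
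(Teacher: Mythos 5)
Your overall architecture coincides with the paper's: bound $|F|\le |d|+\varepsilon b$ via (\ref{61}), insert this into the representation (\ref{28}), control the series kernel $H$ through Theorem \ref{th3}, and conclude with a Gronwall-type argument on $1\le t<T_\varepsilon$, the hypothesis $0<k<\alpha$ entering only to make $1-k/\alpha>0$. The pointwise bound on $F$, the kernel bounds, the final comparison at $t=T_\varepsilon$ and the remark that (\ref{488}) makes the hypothesis nonvacuous are all in line with the paper.

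The genuine gap is in the Gronwall step, and you point at it yourself without closing it. After inserting the bounds, the $H$-part of (\ref{28}) contributes a term of the form $2B\int_0^t e^{-m(t-\tau)}\bigl(S(\tau)+\varepsilon b\bigr)\,d\tau$: only its $\varepsilon b$ portion can be moved into the forcing, while $2B\int_0^t e^{-m(t-\tau)}S(\tau)\,d\tau$ depends on the unknown $S$ and is irreducibly part of the kernel. Hence your claim that "the sharp choice is $c_2=1/\alpha$, obtained by absorbing the exponentially decaying $H$-contribution into the forcing" is circular — it presupposes an a priori bound on $S$ of exactly the size you are trying to prove. Your closing paragraph concedes this and states that "one must show" the decaying kernel part enters the resolvent only as a finite correction; but that assertion is the entire difficulty, and it is not automatic: the resolvent of the convolution kernel $Be^{-m(t-\tau)}$ alone already grows like $e^{(B-m)t}$ when $B>m$, so bounded, exponentially decaying kernels do not in general leave the exponential rate untouched. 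The paper closes precisely this step differently: it keeps the full kernel $\frac{1}{\alpha}+Be^{-m(t-\tau)}$ inside a linear generalization of Gronwall's inequality (citing \cite{mpf}), for which the growth factor is controlled by $\exp\bigl(\int_0^t\bigl[\frac{1}{\alpha}+Be^{-m(t-\tau)}\bigr]d\tau\bigr)\le e^{B/m}\,e^{t/\alpha}$, so that the decaying part inflates only the constant (by $e^{B/m}$) and not the rate; the forcing is bounded directly by $C\varepsilon\log(1/\varepsilon^k)$ using $t<T_\varepsilon$ instead of carrying $(1+t)$. To complete your argument you would have to either quote and verify such a generalized Gronwall lemma for this $t$-dependent kernel, or estimate the resolvent of $\frac{1}{\alpha}+Be^{-m(t-\tau)}$ explicitly; as written, the exponent $1-k/\alpha$ is asserted rather than derived.
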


\begin{proof}: Let us consider function  $ F $ defined in (\ref{25}):

\begin{equation}  \label{65}
\, F \,= \, \sin  \, (d+U) \, -\sin U  \,- \, \varepsilon \,U_{xxt}.   
\end{equation}

\vspace{3mm} According to (\ref{61}), function $ F(x,t,u) $  satisfies the following inequality: 

\begin{eqnarray}
&|F(x,t)| \leq \,\, |d(x,t)| +\varepsilon \,\,b.
\end{eqnarray}
 \noindent  So that, by means of (\ref{28})-(\ref{29}) and (\ref{n42}) and since   $ t\leq \log(\frac{1}{\varepsilon ^k}), $ one obtains:

 \begin{eqnarray}  \label{41}
d(x,t)\leq \,\,   C \,\varepsilon \,\,\log(\frac{1}{\varepsilon ^k}) +   \int_0^ t  \big[\frac{1}{\alpha}  +B e^{- m (t-\tau)}\,\big]
|d(\tau,\xi))|\,d\tau\, .
\end{eqnarray}

\noindent being $ C= (B+\frac{1}{\alpha})  \,b. $ 

So, considering that: 

\[   \int_0^ t  \big[\frac{1}{\alpha}  +B e^{- m (t-\tau)}\, \big]    d\tau\,\,\leq  \frac{1}{\alpha}  \log(\frac{1}{\varepsilon ^k}) +\frac{B}{m}, \]
by means of    a linear generalization of Gronwall's inequality( see f.i.(\cite{mpf}), it results:

\begin{eqnarray}  \label{41}
 d(x,t)\leq \,\,  C  \,\,\varepsilon \,\,\log(\frac{1}{\varepsilon ^k})  \,\, e^{ \frac{B}{m}} \frac{1}{\varepsilon^{k/\alpha}} = \Gamma \,\,\varepsilon^{1-k/\alpha} \,\,\log(\frac{1}{\varepsilon ^k})  \,\,  
\end{eqnarray}

from which theorem is proved. 

\end{proof}

\textbf{Remark} Estimate   (\ref{64}) specifies the infinite time-intervals where the effects of diffusion vanish when $ \varepsilon\rightarrow 0.  $  Indeed, the evolution of the superconductive model  is characterized by  diffusion effects which are of the order of $ \,\displaystyle  \varepsilon^{1-k/\alpha} \,\,\log(\frac{1}{\varepsilon ^k})\,$   with $ 0<k<\alpha. $

\textbf{Remark}
  Formula (\ref{488}) shows that  the class of functions    satisfying  hypotheses  of Theorem \ref{remainder}  
 is not empty.


%

\end{document}